\definecolor{shadecolor}{RGB}{248,248,248}
\newenvironment{Shaded}{\begin{snugshade}}{\end{snugshade}}
\newcommand{\BuiltInTok}[1]{#1}
\newcommand{\ControlFlowTok}[1]{\textcolor[rgb]{0.13,0.29,0.53}{\textbf{#1}}}
\newcommand{\DataTypeTok}[1]{\textcolor[rgb]{0.13,0.29,0.53}{#1}}
\newcommand{\FloatTok}[1]{\textcolor[rgb]{0.00,0.00,0.81}{#1}}
\newcommand{\FunctionTok}[1]{\textcolor[rgb]{0.13,0.29,0.53}{\textbf{#1}}}
\newcommand{\ImportTok}[1]{#1}
\newcommand{\KeywordTok}[1]{\textcolor[rgb]{0.13,0.29,0.53}{\textbf{#1}}}
\newcommand{\NormalTok}[1]{#1}
\newcommand{\OperatorTok}[1]{\textcolor[rgb]{0.81,0.36,0.00}{\textbf{#1}}}
\newcommand{\StringTok}[1]{\textcolor[rgb]{0.31,0.60,0.02}{#1}}
\def\maxwidth{\ifdim\Gin@nat@width>\linewidth\linewidth\else\Gin@nat@width\fi}
\def\maxheight{\ifdim\Gin@nat@height>\textheight\textheight\else\Gin@nat@height\fi}
\def\fps@figure{htbp}
\title{Generating uniform linear extensions using few random bits}
\author{Mark Huber}
\newtheorem{theorem}{Theorem}
\newtheorem{lemma}{Lemma}
\begin{document}
\maketitle

\newcommand{\unifdist}{\textsf{Unif}}
\newcommand{\prob}{\mathbb{P}}
\newcommand{\mean}{\mathbb{E}}
\newcommand{\var}{\mathbb{V}}
\newcommand{\ind}{\mathbb{I}}

\begin{abstract}
A \emph{linear extension} of a partial order
\(\preceq\) over items \(A = \{ 1, 2, \ldots, n \}\) is a permutation
\(\sigma\) such that for all \(i < j\) in \(A\), it holds that
\(\neg(\sigma(j) \preceq \sigma(i))\). Consider the problem of
generating uniformly from the set of linear extensions of a partial
order. The best method currently known uses \(O(n^3 \ln(n))\) operations
and \(O(n^3 \ln(n)^2)\) iid fair random bits to generate such a
permutation. This paper presents a method that generates a
uniform linear extension
using only \(2.75 n^3 \ln(n)\) operations and \( 1.83 n^3 \ln(n) \) iid fair bits on average.
\end{abstract}

\hypertarget{introduction}{%
\section{Introduction}\label{introduction}}

Consider a \emph{partial order} \(\preceq\) on a set of items
\(S = \{1, 2, \ldots, n \}\). Such a partial order \(\preceq\) is a
binary relation with the following three properties.

\begin{enumerate}
\def\labelenumi{\arabic{enumi})}
\item
  \textbf{Reflexive} For all \(i \in S\), \(i \preceq i\).
\item
  \textbf{Antisymmetric} For all \(i, j \in S\), if \(i \preceq j\) and
  \(j \preceq i\) then \(i = j\).
\item
  \textbf{Transitive} For all \(i, j, k \in S\), if \(i \preceq j\) and
  \(j \preceq k\), then \(i \preceq k\).
\end{enumerate}

A permutation of the items \(\{1, \ldots, n \}\) can be viewed as
putting the items in a sorted order. For instance write
\(\sigma = (4, 1, 3, 2)\) if position 1 contains item 4, position 2
contains item 1, and so on.  Say that item 1 comes before item 3 and item 2 in this example.

A permutation is a \emph{linear extension} if when item \( i \) comes
before item \( j \), it cannot hold that \( j \preceq i \).  That is, a linear extension of \( \preceq \) is a permutation \( \sigma \) such 
that for all \(i < j\), \(\neg(\sigma(j) \preceq \sigma(i))\).

Linear extensions have been extensively studied \cite{kahn1984balancing,brightwell1991counting,bubley1999faster,karzanov1991conductance} and appear in a wide variety of contexts, such as scheduling~\cite{cohen2001labeling}, decision theory~\cite{fishburn1985interval}, election modeling~\cite{kenig2019approximate}, and volume computation~\cite{stanley1986two}.

Let \( \Omega \) denote the set of linear extensions of \(S\) with respect
to \( \preceq \).   Consider the problem of generating uniformly from the
set of linear extensions.  Using techniques such as those pioneered by Jerrum, Valiant, and Vazirani~\cite{jerrum1986random} and the Tootsie Pop Algorithm~\cite{huber2018c}, it is possible to use approximate or exact generation techniques to build a fully polynomial approximation scheme for the associated counting problem.  Counting the number of linear extensions of a poset is known to be \( \# \)P-complete~\cite{brightwell1991counting}.

In~\cite{huber2006b}, it was shown how to accomplish the task of generating exactly uniformly from the set of linear extensions
in expected time \(O(n^3 \ln(n))\) using a non-Markovian coupling.
The expected number of iid uniform bits used by this algorithm was
\(\Theta(n^3 \ln(n)^2)\).

For certain classes of problem, faster algorithms are known.  For instance, a nontrivial poset is \emph{height}-2 if there does not exist a chain of distinct items \( i, j, k \) where \( i \preceq j \preceq k \).  For height-2 posets, it is possible to generate a sample using (on average) \( O(n \ln(n)) \) operations~\cite{huber2014near}.  Dittmer and Pak~\cite{dittmer2018counting} showed that the associated counting problem remains \( \# \)P-complete in this restricted class of posets.

While other approaches to generation have been studied~\cite{combarro2019minimals, talvitie2018counting, talvitie2024approximate}, the \( O(n^3 \ln(n)) \) method remains the best known polynomial bound.  This work improves upon the previous algorithm in several ways.

\begin{enumerate}
\def\labelenumi{\arabic{enumi})}
\item
  The old algorithm used one fair random bit and one uniform draw from
  \(\{1, \ldots, n - 1\}\) at each operation. A draw from \( \{1, \ldots, n - 1\} \) takes 
  \(\Theta(\ln(n))\) iid random bits. The new algorithm uses the same order
  of operations, but now each operation takes at most one fair random bit.
\item
  The constant in the expected number of operations is reduced from
  \(16 / \pi \approx 5.093\) to \(2.75\).
\end{enumerate}

The result is a faster algorithm whose expected number of operations
remains \(O(n^3 \ln(n))\), but now the number of iid fair random bits
used equals the number of operations.

\begin{theorem}
There exists an algorithm for drawing uniformly from
the linear extensions of a poset that uses on average at most
\(n(n^2 - n + 3) \lceil \log_2(n) \rceil \leq 2.75 n^3 \ln(n)\)
operations, each of which evaluates \( i \preceq j \) once for two adjacent values of a permutation, and possibly swaps them.  The algorithm also  uses
on average \( 1.83 n^3 \ln(n) \) iid fair random bits.
\end{theorem}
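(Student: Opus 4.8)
The plan is to build on the coupling-based perfect sampler of~\cite{huber2006b} for the adjacent-transposition chain on $\Omega$ and to make two changes: replace the uniformly random choice of swap location by a deterministic cyclic scan — so that one elementary step costs a single fair bit rather than the $\Theta(\log n)$ needed to draw an index — and sharpen the coalescence estimate to obtain the stated constants. First I would fix the chain. For a linear extension $\sigma$, a position $i\in\{1,\dots,n-1\}$, and a bit $b\in\{0,1\}$, let $\Phi(\sigma;i,b)$ swap $\sigma(i),\sigma(i+1)$ when $b=1$ and those two items are $\preceq$-incomparable, and do nothing otherwise; this costs one evaluation of $\preceq$ and at most one swap, i.e.\ one ``operation,'' and is symmetric on $\Omega$, hence preserves the uniform law. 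A \emph{sweep} applies $\Phi(\cdot\,;1,\cdot),\dots,\Phi(\cdot\,;n-1,\cdot)$ with independent fair bits; its kernel is irreducible and aperiodic on $\Omega$, so the uniform distribution is its unique stationary law. One sweep uses $n-1$ operations and $n-1$ fair bits and performs no index draw at all.

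Next I would set up the exactness certificate. Following~\cite{huber2006b}, carry along a \emph{block structure}: an ordered partition of the positions $\{1,\dots,n\}$ into contiguous blocks together with the item set occupying each block, an object that is the same for every linear extension reachable under the coupled updates. The all-singletons block structure is precisely ``all coupled copies agree,'' i.e.\ coalescence, and it names one linear extension. A scan step never merges blocks, and it splits a block once the coupled copies have become certifiably unanimous about the item at that block's first or last position. Running sweeps from time $-T$ to $0$ for $T=1,2,4,\dots$, reusing the bits on the common suffix, and outputting the extension named by the block structure as soon as it has collapsed over the window, gives --- by the coupling-from-the-past theorem --- a sample with the stationary, hence uniform, law on $\Omega$. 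The point needing care is that~\cite{huber2006b} states its block-splitting rules for the random-location chain, so I would re-derive the split rules and the sandwiching invariant for a deterministic sweep; this is routine but is where an error is most likely.

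The technical heart is bounding the expected number of sweeps to coalescence. By a Bubley--Dyer--type coupling~\cite{bubley1999faster}, the restricted adjacent-transposition chain on a block of $k$ positions --- a lazy nearest-neighbour motion of each item --- reaches a configuration that certifies the block's coalescence within $O(k^3\log k)$ single steps; transported to the coupled block structure this gives coalescence in $O(n^3\log n)$ single steps, i.e.\ $O(n^2\log n)$ sweeps, in expectation. A careful potential argument --- tracking, say, $\sum_B(|B|-1)$ over the current blocks $B$, which starts at $n-1$ and must reach $0$ --- pins this down to roughly $n^2\lceil\log_2 n\rceil$ sweeps, the base-2 logarithm and the precise polynomial emerging from the bookkeeping. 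Multiplying by the per-sweep operation count (at most $n-1$, and effectively fewer once the two extreme positions have settled, which is the source of the exact shape $n(n^2-n+3)=n\bigl(n(n-1)+3\bigr)$), folding in the constant-factor overhead of the doubling schedule, and bounding $\lceil\log_2 n\rceil$ against $\ln n$ gives $n(n^2-n+3)\lceil\log_2 n\rceil\le 2.75\,n^3\ln n$ operations. Extracting an explicit small constant from the contraction argument, rather than a bare $O(\cdot)$, is the obstacle I anticipate.

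For the random bits, the only coin flipped in the inner loop is the swap bit, so each operation uses at most one fair bit --- operations straddling a block boundary or touching a singleton block are forced and use none --- whence the expected bit count is at most the expected operation count. A slightly finer accounting, keeping $\log_2 n$ in place of $\lceil\log_2 n\rceil$ where the ceiling is unnecessary and discounting the coin-free steps, brings the average down to $1.83\,n^3\ln n$. I would close with the comparison to~\cite{huber2006b}: its $\Theta(n^3\log^2 n)$ bit cost was entirely due to spending $\Theta(\log n)$ bits per step on the index draw, which the cyclic scan removes, while its operation constant $16/\pi$ is replaced by $11/4$ through the improved coalescence estimate.
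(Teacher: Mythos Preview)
Your high-level plan---replace the random index draw by a deterministic left-to-right sweep so that each elementary step costs one fair bit, then run CFTP---is exactly what the paper does. But the way you propose to \emph{certify} coalescence and to \emph{analyze} the time to coalescence diverges from the paper, and your explanations for the two displayed constants are not the right ones.

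The paper does not carry the block structure of~\cite{huber2006b}. Its bounding object is a single string in $\{*,1,\dots,n\}^n$ together with a counter $k$ of ``active'' items; a $*$ in position $j$ means ``item unknown, but it lies somewhere in $\{1,\dots,j\}$.'' Coalescence occurs when all $*$'s have disappeared. The point of this representation is that under one deterministic sweep the position $p_t$ of a fixed $*$ moves like a very concrete random walk: from $p_t>1$ the increment is $G-2$ with $G$ geometric($1/2$), and from $p_t=1$ the increment is $G-1$. With the potential $\phi(x)=x^2-3x+2$ one gets $\mathbb{E}[\phi(p_{t+1})-\phi(p_t)]=2$ in both cases, and Wald's identity plus a short overshoot calculation give $\mathbb{E}[\tau]\le(n^2-n+2)/2$. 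A refined Markov inequality then yields $\mathbb{P}(\tau\ge(n^2-n+3)/2)\le 1/2$; repeating $\lceil\log_2 n\rceil$ blocks and a union bound over the $n$ symbols gives the $1/2$ success probability after $(n^2-n+3)\lceil\log_2 n\rceil/2$ sweeps. That is where the polynomial $n^2-n+3$ comes from---not from discounting sweeps after extreme positions settle, and not from a Bubley--Dyer contraction on blocks. Your block-potential heuristic would give an $O(n^2\log n)$ sweep bound, but there is no mechanism in it that produces this specific polynomial or the constant $11/4$.

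Your accounting for the $1.83$ versus $2.75$ is also off. In the paper's algorithm \emph{every} elementary step consumes exactly one fair bit; there are no coin-free steps. The gap arises from the CFTP recursion: fresh bits are drawn only once per level, but each non-terminal level runs the chain twice (once for the bounding chain alone, once for the simultaneous chain after the recursive call returns $\sigma_0$). With termination probability at least $1/2$, the expected number of bits is $2t$ while the expected number of elementary steps is $3t$, and $2/3\times 2.75\approx 1.83$. Your ``discount the coin-free steps'' argument does not apply here.
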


The rest of the paper is organized as follows. The following section
discusses the original chain used in~\cite{huber2006b} and shows how
to use a faster chain to accomplish the same goal. Section~\ref{SEC:analysis} analyzes this new method and
gives the running time bound.  The final section implements the algorithm in Julia.

\hypertarget{adjacent-transposition-markov-chains-for-linear-extension}{%
\section{Adjacent transposition Markov chains for linear
extension}\label{adjacent-transposition-markov-chains-for-linear-extension}}

The Markov chain used in~\cite{huber2006b} is a simple Gibbs chain
that uses transpositions of items in adjacent positions.
First a position \( i \in \{1, \ldots, n - 1 \} \) is chosen uniformly
at random.  Next, if \( \sigma(i) \preceq \sigma(i + 1) \) then the
state remains the same.  Otherwise, items \( \sigma(i) \) and \( \sigma(i + 1) \) are swapped with probability \( 1 / 2 \).  This chain was analyzed
in Karzanov and Khachiyan (1991)~\cite{karzanov1991conductance}
and shown to mix in \(O(n^3 \ln(n))\) time.

The following is a representation of this chain.

\(\textbf{adj\_transpos\_chain\_step} (\sigma, i, c)\)

\begin{enumerate}
\def\labelenumi{\arabic{enumi})}
\item
  If \(c = 1\) and \(\neg(\sigma(i) \preceq \sigma(i + 1))\), set
  \((\sigma(i), \sigma(i + 1)) \leftarrow (\sigma(i + 1), \sigma(i))\).
\item
  Return \(\sigma\).
\end{enumerate}

Since Gibbs chains are reversible, the following lemma follows directly.

\begin{lemma} Suppose \(\sigma\) is uniform over \(\Omega\), \(i\) is
uniform over \(\{1, \ldots, n - 1\}\), and \(c\) is uniform over
\(\{0, 1\}\). Then the output
\textbf{adj\_transpos\_chain\_step}\((\sigma, i, b)\) is also
uniform over \(\Omega\).
\end{lemma}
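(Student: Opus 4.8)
The plan is to show that the Markov chain step defined by \textbf{adj\_transpos\_chain\_step} preserves the uniform distribution on $\Omega$, by verifying that it is a reversible chain (in fact a Gibbs/Metropolis-type step) whose stationary distribution is uniform. First I would fix the position $i \in \{1, \ldots, n-1\}$ and regard the conditional map $\sigma \mapsto \textbf{adj\_transpos\_chain\_step}(\sigma, i, c)$ with $c$ uniform on $\{0,1\}$ as a Markov kernel $K_i$ on the full state space of permutations. I would then argue that $K_i$ restricted to $\Omega$ is a well-defined kernel on $\Omega$: if $\sigma \in \Omega$ and the swap at position $i$ is performed, the result $\sigma'$ is still a linear extension. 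This is the one genuine content point — it holds because the swap is only performed when $\neg(\sigma(i) \preceq \sigma(i+1))$, and one checks that transitivity of $\preceq$ then forbids $\sigma(i+1) \preceq \sigma(i)$ as well (otherwise antisymmetry would force $\sigma(i) = \sigma(i+1)$), so $\sigma(i)$ and $\sigma(i+1)$ are incomparable and may legitimately be reordered without creating a violation at position $i$ or, by transitivity, anywhere else.

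Next I would establish detailed balance for $K_i$ with respect to the uniform measure on $\Omega$. The only non-trivial transitions under $K_i$ are between pairs $\{\sigma, \sigma'\}$ that differ by a transposition at position $i$, where both are linear extensions; for such a pair, exactly one of $\sigma, \sigma'$ has its position-$i$ entries "out of order" in the sense triggering a possible swap, and from that state the chain moves to the other with probability $1/2$, while from the other state it stays put. Hence $\pi(\sigma) K_i(\sigma, \sigma') = \pi(\sigma')K_i(\sigma', \sigma)$ trivially when $\pi$ is uniform, since $K_i(\sigma,\sigma') = 1/2 = K_i(\sigma',\sigma)$ on the relevant pair (the asymmetry in the "stay" probabilities is absorbed into the diagonal and does not affect the off-diagonal balance). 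Therefore uniform $\pi$ is stationary for $K_i$. Averaging over $i$ uniform on $\{1, \ldots, n-1\}$ gives a kernel $K = \frac{1}{n-1}\sum_i K_i$ that is a convex combination of kernels each preserving $\pi$, so $K$ preserves $\pi$; and the law of the output of \textbf{adj\_transpos\_chain\_step}$(\sigma, i, c)$ with $\sigma \sim \pi$, $i$ uniform, $c$ uniform and all independent is exactly $\pi K = \pi$, which is the claim.

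The main obstacle is not really a technical one but a matter of being careful about what "reversible" buys us here: I would want to make explicit that $K_i(\sigma, \sigma') = K_i(\sigma', \sigma)$ for the swap pair despite the rule looking asymmetric (one state "proposes" a swap, the other does not), since that is precisely why the uniform distribution — rather than some reweighted distribution — is the stationary one. I would also note in passing that irreducibility of the adjacent-transposition chain on $\Omega$ is standard (any linear extension can be obtained from any other by a sequence of adjacent swaps of incomparable elements), so $\pi$ is in fact the unique stationary distribution, though only stationarity is needed for this lemma. Everything else is bookkeeping; no delicate estimates are required.
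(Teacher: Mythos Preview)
Your approach---verifying reversibility of the Gibbs step with respect to the uniform distribution---is exactly the paper's: the paper simply states ``Since Gibbs chains are reversible, the following lemma follows directly'' and gives no further detail, so your write-up is a fleshed-out version of the same idea.

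That said, your central detailed-balance paragraph contains an internal inconsistency you should fix. You assert that for a swap pair $\{\sigma,\sigma'\}\subset\Omega$, ``exactly one of $\sigma,\sigma'$ has its position-$i$ entries `out of order' in the sense triggering a possible swap, and from that state the chain moves to the other with probability $1/2$, while from the other state it stays put.'' If that were true you would have $K_i(\sigma,\sigma')=1/2$ and $K_i(\sigma',\sigma)=0$, which is \emph{not} detailed balance for the uniform measure and directly contradicts your very next sentence claiming $K_i(\sigma,\sigma')=1/2=K_i(\sigma',\sigma)$. The correct picture is that if both $\sigma$ and $\sigma'$ lie in $\Omega$ and differ by the transposition at $i$, then $\sigma(i)$ and $\sigma(i+1)$ are incomparable, so the swap condition $\neg(\cdot\preceq\cdot)$ is triggered from \emph{both} states, and each moves to the other with probability $1/2$. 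There is no asymmetry to ``absorb into the diagonal.'' Relatedly, your justification that $\sigma(i+1)\not\preceq\sigma(i)$ does not come from transitivity or antisymmetry as you wrote; it is immediate from $\sigma\in\Omega$ (the linear-extension condition at positions $i<i+1$). Once you correct these two sentences the argument is clean and complete.
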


\hypertarget{the-adjacent-position-transposition-bounding-chain}{%
\subsection{The adjacent position transposition bounding
chain}\label{the-adjacent-position-transposition-bounding-chain}}

A \emph{bounding chain} has state space consisting of \(2^\Omega\). That
is, each state of the bounding chain is a subset of possible states in
the original chain.

In this case, the representation of the set of states is given by a
vector of length \(n\) whose components are drawn from
\(\{*\} \cup \{1, 2, \ldots, n \}\) together with an integer that
indicates the number of active states. Extend the partial order so that
for all \(i \in \{1, 2, \ldots, n\}\), the statement \(* \preceq i\) is false.

For simplicity, assume without loss of generality that the identity
permutation \((1, 2, \ldots, n)\) is a linear extension. If not, simply
sort the items in \(O(n \ln(n))\) time using the partial order and
relabel the items so that this is true. This will greatly simplify the
description of the bounding chain.

Before giving the formal description, consider
some examples. Suppose the bounding state
is \(((*, *, *, *, 1), 1)\). This indicates that only item 1 is active,
and that it lies anywhere in positions 1 through 5.

Now suppose the bounding state is \(((*, 2, *, 1, 3), 3)\). This
indicates that item 2 is in position 1 or 2, item 1 is somewhere in
position 1 through 4, item 3 is somewhere in position 1 through 5, and
only these 3 items are active.

\textbf{Definition} Say that bounding chain state \[
(r, k) \in (\{*, 1, \ldots, n\}^n, \{1, 2, \ldots, n\})
\] \emph{bounds} permutation \(\sigma\) if \[
(\forall a \leq k)(\forall i, j \in \{1, \ldots, n\})(((\sigma(i) = a) \wedge (r(j) = a)) \rightarrow (i \leq j)).
\]

The following invariants will be maintained for this bounding state
\((r, k)\).

\begin{enumerate}
\def\labelenumi{\arabic{enumi})}
\item
  (Active items) For all \(a \leq k\), there exists \(i\) such that
  \(r(i) = a\).
\item
  (Partial order holds on active items) If \(r(i) = a \leq k\),
  \(r(j) = b \leq k\), and \(a \preceq b\), then \(i < j\).
\end{enumerate}

The following pseudocode takes one step in the bounding chain.  The only difference between this and the original chain is that if a \( * \) symbol reaches position \( n \), it is removed and replaced with \( 1 + k \).  Then the number of active items is increased by one to \( 1 + k \).

\(\textbf{bounding\_chain\_step}(y = (r, k), i, c)\)

\begin{enumerate}
\def\labelenumi{\arabic{enumi})}
\item
  If \(c = 1\) and \(\neg(r(i) \preceq r(i + 1))\), set
  \((r(i), r(i + 1)) \leftarrow (r(i + 1), r(i))\).
\item
  If \(r(n) = *\), set \(r(n) \leftarrow k + 1\), and
  \(k \leftarrow k + 1\).
\item
  Return \((r, k)\).
\end{enumerate}

\begin{lemma} The bounding chain step preserves the active item and
partial order invariants.
\end{lemma}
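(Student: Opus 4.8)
The plan is to verify the two invariants directly by case analysis on what happens in a single call to $\textbf{bounding\_chain\_step}$, distinguishing the effect of step 1 (a possible swap) from step 2 (the promotion of a trailing $*$). First I would observe that step 1 only swaps $r(i)$ and $r(i+1)$, and only does so when $\neg(r(i) \preceq r(i+1))$; I would check that both invariants are preserved under this swap. For the active-item invariant this is immediate, since step 1 merely permutes the entries of $r$ and so the set $\{a : a \le k\} \subseteq \{r(1),\ldots,r(n)\}$ is unchanged while $k$ itself is unchanged. For the partial-order invariant, the only pair whose relative order changes is $(r(i), r(i+1))$; writing $a = r(i)$, $b = r(i+1)$ before the swap, I need to rule out the bad case that after swapping we have some active $a' \preceq b'$ sitting in the wrong order. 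The key point is that the swap happens precisely because $\neg(a \preceq b)$, so the pair that moves into positions $(i+1, i)$ is $(a, b)$ with $\neg(a \preceq b)$ — consistent with the invariant. The case where one of $a, b$ equals $*$ needs the extended convention $* \preceq i$ is false, which the paper has set up: if $r(i) = *$ then $\neg(* \preceq r(i+1))$ always holds, so a $*$ always "wants" to move right, which is exactly what drives step 2.

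Next I would handle step 2: when $r(n) = *$, we set $r(n) \leftarrow k+1$ and $k \leftarrow k+1$. The active-item invariant requires that for each $a \le k+1$ there is a position holding $a$; for $a \le k$ this is inherited (those positions were not touched — well, position $n$ changed from $*$ to $k+1$, but $k+1 > k \ge a$ so no $a \le k$ lived there), and for $a = k+1$ the new entry $r(n) = k+1$ supplies it. For the partial-order invariant I must check pairs involving the newly-activated item $k+1$ at position $n$: if $r(j) = b \le k+1$ and $b \preceq k+1$, I need $j < n$, which holds since every other position index is at most $n-1$; and I should note $k+1 \not\preceq k+1$ is vacuous since $k+1 \preceq k+1$ forces nothing ($j = n = n$, so "$j < n$" would fail, but reflexivity $k+1 \preceq k+1$ with $a = b = k+1$ both at position $n$ is the degenerate $i = j$ case and is fine because the invariant as stated only constrains $i < j$ strictness — here I should double-check the exact quantifier in invariant 2, which says "if $a \preceq b$ then $i < j$", so I genuinely need to confirm it is never invoked with $a = b$ both at the same position, i.e. the items are distinct). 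Since $k+1$ appears at exactly one position, and it is position $n$, the only constraints are $b \preceq k+1$ for $b \le k$ forcing the holder of $b$ to be left of $n$ — true — and $k+1 \preceq b$ for $b \le k$ forcing $n < (\text{position of } b)$, which is impossible, so I must argue this antecedent never holds. This is where I would use the assumption that the identity permutation is a linear extension together with invariant 1 before the step: the items $1, \ldots, k$ are all present, and $k+1$ was still a $*$, meaning in the original chain's coupling $k+1$ has not yet been "pinned down"; more carefully, I expect the clean argument is that since the identity is a linear extension, $j \preceq k+1$ with $j \le k$ is possible but $k+1 \preceq j$ with $j \le k$ would contradict the identity being a linear extension (as $k+1 > j$ in the natural order). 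That last observation is the crux.

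I expect the main obstacle to be the bookkeeping around step 2 and the interaction with the relabeling assumption: one must be careful that "$r(n) = *$ gets replaced by $k+1$" cannot violate invariant 2 via a relation $k+1 \preceq b$ for some already-active $b \le k$ sitting at a position $\le n$. Resolving this cleanly requires exploiting that the relabeling makes the identity a linear extension, so $k+1 \preceq b$ with $b < k+1$ is forbidden outright — once that is pinned down, the rest is routine case-checking. A secondary, smaller obstacle is making sure step 1 and step 2 compose correctly in a single invocation: step 1 may be what creates the configuration $r(n) = *$ that triggers step 2, so the invariants should be verified for the composite, establishing first that after step 1 the invariants still hold (with the same $k$), and then that step 2 restores them with $k$ incremented. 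I would also remark that if $r(n) = *$ already before step 1 and step 1 does not disturb position $n$, the argument is the same.
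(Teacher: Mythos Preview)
Your proposal is correct and follows the same approach as the paper: both invariants are checked separately under the swap in line~1 and under the activation in line~2, with the crucial observation that the identity being a linear extension is what rules out $k+1 \preceq b$ for any already-active $b \le k$. The paper's own proof makes exactly these two points in two sentences each; you have simply unpacked the case analysis (including the bookkeeping about pairs at other positions, the $*$ cases, and the composition of step~1 with step~2) that the paper leaves implicit. Your worry about the reflexive case $a=b=k+1$ is a minor wrinkle in how the invariant is phrased rather than an issue with the preservation argument, since the same degenerate instance would already be present before the step.
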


\begin{proof}
Note that at the beginning of the step, if all active
items \(\{1, 2, \ldots, k\}\) are present in \(r\), they will still be
present because only their positions can be switched, they cannot be
removed. If the number of active items is increased by 1, then \(r(n)\)
equals the new active item. So the active item invariant holds.

Line 1 of the bounding chain step does not allow the partial order to be violated.  If the number of active items is increased,
the assumption that the identity permutation is a linear extension means
that the partial order invariant still holds as well.
\end{proof}

\hypertarget{combining-the-adjacent-transposition-markov-chain-and-bounding-chain}{%
\subsection{Combining the adjacent transposition Markov chain and
bounding
chain}\label{combining-the-adjacent-transposition-markov-chain-and-bounding-chain}}

Suppose that \((r, k)\) bounds \(\sigma\). Consider how to make one step
in both the underlying state \(\sigma\) and the bounding chain
simultaneously in order for the bounding state to keep bounding the
underlying chain state. The combined chain can use the coin flip one way for the underlying chain and a different way for the bounding chain.

\(\textbf{simultaneous\_chain\_step}(\sigma, y = (r, k), i, c)\)

\begin{enumerate}
\def\labelenumi{\arabic{enumi})}
\item
  If \(\sigma(i) = r(i + 1)\), let \(c' \leftarrow 1 - c\), else let
  \(c' \leftarrow c\).
\item
  Return
  (\(\textbf{adj\_transpos\_chain\_step} (\sigma, i, c')\),
  \(\textbf{bounding\_chain\_step}(y, i, c)\))
\end{enumerate}

Note that for \(c\) uniform over \(\{0, 1\}\),
\(1 - c\) is also uniform over \(\{0, 1\}\), so no matter what the state
of the bounding chain relative to the underlying chain, the step taken
in the Markov chain will preserve the uniform distribution on
\(\Omega\).

The next theorem says that this combined step does what it is intended
to: if the underlying state \(\sigma\) starts out bounded by \(y\), then
after one step in the combined chain, the underlying state is still
bounded by the bounding chain state.

\begin{theorem}
If \(\sigma\) is bounded by \(y\), then for all
\(i \in \{1, \ldots, n - 1\}\) and \(c \in \{0, 1\}\), 
\( \textbf{adj\_transpos\_chain\_step}(\sigma, i, c') \) is bounded by \( \textbf{bounding\_chain\_step}(y, i, c) \).
\end{theorem}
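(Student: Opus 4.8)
The plan is to track, for each item $a$ that is active in the output of $\textbf{bounding\_chain\_step}(y,i,c)$, its position in $\sigma'$ versus its position in the output vector, and to show the former does not exceed the latter. Write $c'$ as in $\textbf{simultaneous\_chain\_step}$, let $\sigma' = \textbf{adj\_transpos\_chain\_step}(\sigma, i, c')$, and let $r''$ be the vector produced by line~1 of $\textbf{bounding\_chain\_step}(y,i,c)$, so the output vector $r'$ agrees with $r''$ except that a $*$ in position $n$ may have become $k'=k+1$. I would use the straightforward invariant that $r$ (hence $r''$) contains each of $1,\dots,k$ exactly once and is otherwise all $*$'s, so that the position $p_r(a)$ of an active item in $r$ (and $p_{r''}(a)$ in $r''$) is well defined; let $p_\sigma(a)$ be the position of $a$ in $\sigma$. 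The cheap cases are quick: if line~2 fires then the new active item $k+1$ occupies position $n$ of $r'$, which is $\ge p_{\sigma'}(k+1)$; and for $a\le k$ one has $p_{r'}(a)=p_{r''}(a)$ since $r'$ and $r''$ differ only at a position holding a $*$. So it suffices to prove $p_{\sigma'}(a)\le p_{r''}(a)$ for every $a\le k$. Since both transpositions touch only positions $i$ and $i+1$, $p_{\sigma'}(a)$ can exceed $p_\sigma(a)$ only in the event $(A_a)$ that $\sigma(i)=a$ and the adjacent-transposition step swaps, and $p_{r''}(a)$ can fall below $p_r(a)$ only in the event $(B_a)$ that $r(i+1)=a$ and the bounding step swaps; absent both, $p_{\sigma'}(a)\le p_\sigma(a)\le p_r(a)\le p_{r''}(a)$ by the hypothesis that $y$ bounds $\sigma$. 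Finally $(A_a)$ and $(B_a)$ cannot both hold, since together they force $\sigma(i)=r(i+1)$, which sets $c'=1-c$ and makes the two swap conditions (each needing its own coin equal to $1$) incompatible. So only $(A_a)$ and $(B_a)$ in isolation remain, and the heart of each is a single configuration.

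For $(A_a)$: here $p_{\sigma'}(a)=i+1$, and I must show $p_{r''}(a)\ge i+1$. If $p_r(a)\ge i+2$ this is clear; if $p_r(a)=i+1$ then $r(i+1)=a=\sigma(i)$ triggers the flip, so the swap in $(A_a)$, needing $c'=1$, forces $c=0$, the bounding step does not swap, and $p_{r''}(a)=i+1$. The crux is $p_r(a)=i$: if the bounding step swaps then $a=r(i)$ moves to position $i+1$ and we are done; if not, $a=r(i)\preceq r(i+1)=:w$. Then $w$ is a genuine item (as $a\preceq *$ is false), hence active by the invariant above, so $y$ bounding $\sigma$ gives $p_\sigma(w)\le p_r(w)=i+1$; but $p_\sigma(w)\ne i$ (since $\sigma(i)=a\ne w$) and $p_\sigma(w)\ne i+1$ (else the condition $\neg(\sigma(i)\preceq\sigma(i+1))$ required by $(A_a)$ would be $\neg(a\preceq w)$, which is false), so $w$ lies strictly before $a$ in $\sigma$ although $a\preceq w$, contradicting that $\sigma$ is a linear extension. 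Hence the bounding step does swap and $p_{r''}(a)=i+1$.

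For $(B_a)$ the argument is dual, using the bounding chain's partial-order invariant in place of the linear-extension property of $\sigma$; here $p_{r''}(a)=i$, and I must show $p_{\sigma'}(a)\le i$. If $p_\sigma(a)\le i-1$ this is clear; if $p_\sigma(a)=i$ then $\sigma(i)=a=r(i+1)$ triggers the flip, so the swap in $(B_a)$, needing $c=1$, forces $c'=0$, the adjacent-transposition step does not swap, and $p_{\sigma'}(a)=i$. The crux is $p_\sigma(a)=i+1$: if the adjacent-transposition step swaps then $a=\sigma(i+1)$ moves to position $i$ and we are done; if not, then since $\sigma(i)\ne\sigma(i+1)=a=r(i+1)$ there is no flip, so $c'=c$, and $(B_a)$ needs $c=1$, so $c'=1$ and the non-swap forces $\sigma(i)\preceq a$. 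Writing $u:=\sigma(i)$, we get $u\preceq a$ with $a$ active, so $u$ is active (the active set is downward closed since the identity is a linear extension) and $u\ne a$; the partial-order invariant then gives $p_r(u)<p_r(a)=i+1$, and with $y$ bounding $\sigma$ we get $p_\sigma(u)\le i$, which—since $\sigma(i)=u$—forces $r(i)=u$. But then $r(i)=u\preceq a=r(i+1)$, so the bounding step's swap condition fails and it cannot have swapped, contradicting $(B_a)$. Hence the adjacent-transposition step does swap and $p_{\sigma'}(a)=i$, completing the proof.

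I expect the main obstacle to be not any single computation but keeping this case split honest: seeing that $(A_a)$ is blocked precisely because $\sigma$ is a linear extension, that $(B_a)$ is blocked precisely by the bounding chain's partial-order invariant, and that the flip in $\textbf{simultaneous\_chain\_step}$ is exactly what removes the one configuration, $\sigma(i)=r(i+1)$, in which both chains would otherwise want to transpose. It is worth flagging that the $(A_a)$ argument genuinely uses $\sigma\in\Omega$ rather than merely that $\sigma$ is a permutation bounded by $y$; this is part of the standing setup, since the underlying chain lives on $\Omega$.
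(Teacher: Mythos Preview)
Your proof is correct, and at its core it matches the paper's: your ``\((A_a)\) and \((B_a)\) cannot both hold'' is the paper's Case~3 (\(\sigma(i)=r(i+1)\) forces \(c'=1-c\)); the crux of your \((A_a)\) with \(p_r(a)=i\) is the paper's Case~1 (\(\sigma(i)=r(i)\) and \(\sigma\) swaps, so \(r\) must swap, using that \(\sigma\) is a linear extension); and the crux of your \((B_a)\) with \(p_\sigma(a)=i+1\) is the paper's Case~2 (\(\sigma(i+1)=r(i+1)\) and \(r\) swaps, so \(\sigma\) must swap, using the partial-order invariant on \(r\)). The difference is organizational: you work item-by-item, tracking \(p_\sigma(a)\) against \(p_r(a)\) and explicitly disposing of the easy positional subcases, whereas the paper jumps straight to the three ``dangerous'' configurations at positions \(i,i+1\) without spelling out why nothing else can go wrong. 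Your framing is more self-contained and makes the exhaustiveness of the case split transparent; the paper's is terser but leaves the reader to see that only those three configurations matter. One small difference in the \((B_a)\) crux: you argue \(u=\sigma(i)\) is active by downward closure (via the identity being a linear extension) and then pin down \(r(i)=u\); the paper instead splits on whether \(\sigma(i)\) is active and, in the inactive branch, invokes directly that no inactive item can precede an active one. Both routes use the same underlying fact.
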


\begin{proof}
Suppose that \(\sigma\) is bounded by
\((r, k)\).

If the number of active items is increased by 1, the new active
item \(k + 1\) is in position \(n\) so trivially bounds the position of
\(k + 1\) in the new underlying state.

There are three cases to consider.

\emph{Case 1:}  \(\sigma(i) = r(i)\) and the underlying state
swaps.  Then the goal is to show that the bounding chain must also swap so that item \( \sigma(i) \) is still bounded after moving to position \( i + 1 \).

For the underlying state to
swap requires \( c' = 1 \), and \( \sigma(i) = r(i) \) gives \( c = c' = 1 \).

Let \( a \neq r(i) \) be any active item such that \( r(i) \preceq a \).  Then there exists \( j \) such that \( r(j) = a \) and \( j > i \) by the poset respecting invariant of \( r \).  
There exists \( j' \) such that \( \sigma(j') = a \), and since \( r \) bounds \( \sigma \), \( j' \leq j \).  Since \( \sigma \) is a linear extension it holds that \( j' > i \).  Moreover \( \sigma(i + 1) \neq a \) since it was assumed that the \( \sigma \) chain swapped. So \( i + 1 < j' \leq j \).

Therefore, no active items that \( r(i) \) precedes appear as \( r(i + 1) \).  So whenever \( \sigma(i) = r(i) \) and the underlying chain swaps, the bounding chain must swap as well.  Nonactive items do not prevent the bounding chain state from swapping.

\emph{Case 2:} \(\sigma(i + 1) = r(i + 1)\) and the bounding chain swaps.  The goal here is to show that the underlying chain must have also swapped as well.

Because the bounding chain swaps \( c = 1 \) and since \( \sigma(i) \neq r(i + 1) \), \( c = c' = 1 \) so the underlying chain does attempt to swap.

If \( r(i) = \sigma(i) \), then the fact that the bounding chain swaps means that \( \neg(r(i) \preceq r(i + 1)) \), so the underlying chain swaps as well.

Otherwise, let \( a = \sigma(i) \).  If \( a \) is an active item there exists \( j \) such that \( r(j) = a \).  Because \( r \) bounds \( \sigma \) it holds that \( j > i \).  Moreover, \( j \neq i + 1 \) so \( j > i + 1 \).  Since \( r \) obeys the partial order, this means \( \neg(a \preceq r(i + 1)) \) and the underlying state must swap as well.

If \( a \) is not an active item, then \( \neg(a \preceq r(i + 1)) \).  Again in this case the underlying chain swaps as well.

\emph{Case 3:}  \( \sigma(i) = r(i + 1) \).  Here it is important to show that it cannot happen that both the underlying state and the bounding state swap, as that would move item \( \sigma(i) \) to position \( i + 1 \) and item \( r(i + 1) \) to position \( i \).  Fortunately, they cannot both swap because in this case \( c' = 1 - c \) so they cannot both be 1.
\end{proof}

\hypertarget{deterministic-versus-random-scan}{%
\subsection{Deterministic versus random
scan}\label{deterministic-versus-random-scan}}

In~\cite{huber2006b}, this combined bounding chain was applied to a
\emph{random scan} Markov chain where \(i\) is chosen uniformly from
\(\{1, \ldots, n - 1 \}\) and \(c\) is uniform over \(\{0, 1\}\).

Suppose instead that this combined bounding chain is applied to a
\emph{deterministic scan} Markov chain where a step is taken for each
\(i\) from 1 to \(n - 1\) with a new value of \(c\) flipped for each
step. The resulting update still has a stationary distribution that is
uniform over \(\Omega\).

A single step in this deterministic scan chain uses \(n - 1\)
operations, and \(n - 1\) random bits. These steps can then be used to
form a stochastic process where each component is an underlying state
together with a bounding chain state that bounds it \[
(\sigma_0, y_0), (\sigma_1, y_1), \ldots, (\sigma_t, y_t).
\] Here
\[
y_0 = (*, *, *, \ldots, *, 1)
\] and assume that \(\sigma_0\) is uniform over \(\Omega\).

Now \(\sigma_t\), because it was a stationary state run forward from a
stationary state \(\sigma_0\) for a fixed number of steps \(t\), is
still stationary.

But if \(y_t\) is a permutation, then it was unnecessary to run the
\(\sigma\) process along side it! Just running the \(y_t\) process
forward would have been enough to completely determine \(\sigma_t\). This
is the essential idea of Propp and Wilson~\cite{proppw1996} that they called
Coupling From the Past (CFTP).

What if \(y_t\) is not a permutation? Then a user can call \textbf{CFTP}
again (perhaps with a larger value of \(t\) to make it more likely to
work) to generate \(\sigma_0\), and run the same steps again to find
\((\sigma_t, y_t)\).

The Fundamental Theorem of Perfect Simulation~\cite{huber2015b} says that an
algorithm of this form will generate samples exactly from the target
distribution as long as the chance that it eventually terminates is 1.

As long as \(t > n\), there exists a sequence of moves that eliminate all
\( * \) symbols and make \(y_t\)
a permutation, so there is a positive chance of exiting at each step of
the process.

The algorithm then is as follows.

\textbf{CFTP\_linear\_extensions}\((t)\)

\begin{enumerate}
\def\labelenumi{\arabic{enumi}.}
\item
  Let \(y_0 \leftarrow (*, *, *, \ldots, *, 1).\)
\item
  For \(t'\) from 1 to \(t\)

  \begin{enumerate}
  \def\labelenumii{\alph{enumii}.}
  \item
    Let \(y \leftarrow y_{t' - 1}\).
  \item
    Draw \((c_{t', 1}, \ldots, c_{t', n - 1})\) iid uniform over
    \(\{0, 1\}\).
  \item
    For \(i\) from 1 to \(n - 1\): Let
    \(y \leftarrow \textbf{bounding\_chain\_step}(y, i, c_{t', i})\).
  \item
    Let \(y_{t'} \leftarrow y\).
  \end{enumerate}
\item
  If \(y_t\) is a permutation, return \(y_t\).
\item
  Let \(\sigma_0 \leftarrow \textbf{CFTP\_linear\_extensions}(t)\) and
  \(y_0 \leftarrow (*, *, *, \ldots, *, 1)\).
\item
  For \(t'\) from 1 to \(t\),

  \begin{enumerate}
  \def\labelenumii{\alph{enumii}.}
  \item
    Let \(\sigma \leftarrow \sigma_{t' - 1}\),
    \(y \leftarrow y_{t' - 1}\).
  \item
    For \(i\) from 1 to \(n - 1\): Let
    \((\sigma, y) \leftarrow \textbf{simultaneous\_chain\_step}(\sigma, y, i, c_{t', i})\).
  \item
    Let \(\sigma_{t'} \leftarrow \sigma\), \(y_{t'} \leftarrow y\).
  \end{enumerate}
\item
  Return \(\sigma_t\).
\end{enumerate}

\begin{lemma}
If the probability that the algorithm terminates is at least \( 1 / 2 \),
the average number of bits used is at most \( 2t \), and the average number of
step calls is at most \( 3t \).
\end{lemma}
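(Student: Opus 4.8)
The plan is to read the cost straight off the recursive structure of \textbf{CFTP\_linear\_extensions}$(t)$. Let $b$ be the number of fresh random bits consumed by one forward pass of the bounding chain (lines 1--3) and $s$ the number of calls it makes to a step function; by inspection $b=s$, and this common quantity is the one the statement abbreviates as $t$. An invocation of \textbf{CFTP\_linear\_extensions}$(t)$ does the following: it performs one forward pass (lines 1--3), and with probability $p$ --- the probability that $y_t$ is a permutation, assumed here to be at least $1/2$ --- it returns; otherwise (lines 4--6) it makes one recursive self-call to produce $\sigma_0$ and then performs one \emph{replay} pass via \textbf{simultaneous\_chain\_step}, which reuses the coins $c_{t',i}$ already drawn in line 2b and so consumes no new bits. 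Since $p>0$ the recursion terminates almost surely, and the number $G$ of invocations of \textbf{CFTP\_linear\_extensions}$(t)$ that occur is geometric with success probability $p$, so $\mean[G]=1/p\le 2$.

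For the bit count: fresh randomness is drawn only in line 2b, so each of the $G$ invocations uses exactly $b$ bits and the replay passes use none. Hence the expected number of bits is $\mean[G]\,b = b/p \le 2b = 2t$.

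For the step count: every one of the $G$ invocations performs its forward pass ($s$ step calls), and the $G-1$ invocations that fail each additionally perform exactly one replay pass ($s$ further step calls, counting each call to \textbf{simultaneous\_chain\_step} as a single step). The expected number of step calls is therefore $\mean[(2G-1)s] = (2/p-1)\,s \le 3s = 3t$.

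The part I would be most careful about is the bookkeeping around lines 4--6. One must check that each failing invocation triggers exactly one replay pass --- the recursive call in line 4 returns a fully formed $\sigma_0$, after which lines 5--6 run a single sweep --- so that the failing invocations contribute $(G-1)s$ and no more, and that a replay pass is charged as $s$ and not $2s$ (i.e.\ \textbf{simultaneous\_chain\_step} counts once even though it touches both $\sigma$ and $r$); without this the step bound would come out as $4t$. The only genuine asymmetry in the argument is that the replay passes add to the step budget but not to the bit budget, which is exactly what separates the bound $3t$ from the bound $2t$.
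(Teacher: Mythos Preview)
Your argument is correct and is essentially the same as the paper's: both identify the number $G$ of invocations as geometric with success probability $p\ge 1/2$, charge $t$ fresh bits per invocation to get $\mean[G]\,t\le 2t$, and charge $t$ forward steps per invocation plus $t$ replay steps per failing invocation to get $\mean[(2G-1)t]\le 3t$. If anything you are slightly more careful than the paper, which writes ``mean $2$'' where ``mean at most $2$'' is what the hypothesis $p\ge 1/2$ actually gives.
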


\begin{proof}
Note that the number of recursions is geometrically distributed with mean 2.  These require \( 2t \) random bits on average.  In the final call, there are \( t \) steps taken, and in any call that calls a recursion, there at \( 2t \) steps taken.  Therefore, the expected number of steps is \( (2t)(2 - 1) + (1)(t) = 3t \).
\end{proof}

\hypertarget{analysis}{%
\section{Analysis}\label{SEC:analysis}}

Now consider how to find a value \( t \) where the probability of
termination is at least \( 1 / 2 \).

Recall that the bounding chain contains symbols \(*\) together with
numbers of item. When a particular symbol \(*\) reaches the right hand
side of the \(n\)-tuple, it disappears and is replaced by the new active item.
Let \( \tau_* \) denote the time at which this occurs. Then the following
holds.

\begin{lemma}
Let \(*\) be in the first component of the bounding chain and
\(\tau = \inf\{t:* \text{ disappears} \}\). Then \[
\mathbb{E}[\tau] \leq (n^2 - 3n + 2) / 2.
\]
\end{lemma}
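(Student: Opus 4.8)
The plan is to track the position of a single $*$ symbol as it performs a kind of lazy random walk on $\{1, \ldots, n\}$, absorbed when it reaches position $n$. The key observation is that during one deterministic scan (one pass of $i$ from $1$ to $n-1$), the $*$ symbol can only move to the right, and it moves past position $i$ to position $i+1$ exactly when the scan reaches index $i$, the $*$ currently sits at position $i$, the coin $c_{t',i}=1$, and the symbol currently in position $i+1$ is an item $b$ with $\neg(* \preceq b)$ — which is automatic since $* \preceq b$ is always false. So in fact the only obstruction to the $*$ moving right at step $i$ is the coin: if the $*$ is at position $i$ when the scan hits $i$, it advances with probability exactly $1/2$ and otherwise stays. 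Since the scan then continues to $i+1, i+2, \ldots$, once the $*$ advances it gets another chance at the next index in the very same scan. Thus within a single scan, starting from position $p$, the $*$ advances a $\mathrm{Geometric}$-like number of steps; more precisely, conditioning on its position at the start of the scan, after the scan it has moved right by $G$ positions where $\prob(G \geq m)$ behaves like $(1/2)^m$ until it hits the wall at $n$.

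First I would set up the notation: let $X_0 = 1$ be the starting position of the $*$ and let $X_t$ be its position after $t$ scans, with $X_t = n$ meaning it has been absorbed (disappeared) by time $t$. I would then argue that, given $X_{t-1} = p < n$, the increment $X_t - X_{t-1}$ stochastically dominates $\min(G, n-p)$ where $G$ is geometric with parameter $1/2$ on $\{0, 1, 2, \ldots\}$ (i.e.\ $\prob(G = m) = 2^{-(m+1)}$), because each successive index in the scan gives an independent fair-coin chance to advance one more step. Hence $\mean[X_t - X_{t-1} \mid X_{t-1} = p] \geq \min(1, \text{something})$ — but the cleaner route is to bound the expected number of scans directly. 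Let $\tau$ be the number of scans until absorption. Using the domination, I would show that the expected number of scans needed to traverse from position $p$ to position $p+1$ (a single unit step) is at most $2$, since each scan that finds the $*$ at position $p$ advances it with probability at least $1/2$. Summing over the $n-1$ unit steps from position $1$ to position $n$ would naively give $\mean[\tau] \leq 2(n-1)$, which is far too weak; the point is that within one scan the $*$ typically advances many positions, so I need the within-scan geometric gain.

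The cleanest approach: show that for $X_{t-1} = p$, $\mean[n - X_t \mid X_{t-1} = p] \leq (n-p) \cdot c$ for some contraction factor, or more simply bound $\mean[\,\text{number of positions still to go after scan } t\,]$. Actually the tight bound $(n^2 - 3n + 2)/2 = \binom{n-1}{2}$ suggests a different, exact accounting: $\binom{n-1}{2}$ is the number of pairs $(i,j)$ with $2 \leq i < j \leq n$ or equivalently $\sum_{p=1}^{n-1}(n-1-p) \cdot \text{(something)}$... in fact $\sum_{k=1}^{n-1} (k-1) = \binom{n-1}{2}$. So I would guess the real argument assigns to the $*$ at position $p$ an expected one more scan to leave $p$ but weights it: heuristically the $*$ makes roughly one "fresh start" per scan and the number of scans equals the number of times the $*$ fails to reach $n$ in a scan. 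I would instead prove $\mean[\tau \mid X_0 = p] \leq \binom{n-p}{2}/(n-p) \cdot (\ldots)$ — let me just commit to the supermartingale: define $\phi(p) = \binom{n-p}{2}\cdot\frac{2}{n-p} = n-p-1$ hmm. The honest plan is: define the potential $\phi(p)$ so that $\mean[\phi(X_t) \mid X_{t-1}=p] \leq \phi(p) - 1$ for all $p < n$ and $\phi(n) \geq 0$; then $\mean[\tau \mid X_0 = 1] \leq \phi(1)$. Given the target, $\phi(1) = (n^2-3n+2)/2$, and one natural choice is $\phi(p) = \binom{n-p}{2}$, since $\binom{n-1}{2} = (n^2-3n+2)/2$. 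So I would verify that with $D := X_t - p$ distributed as $\min(G, n-p)$ with $G \sim \mathrm{Geom}(1/2)$ on $\{0,1,\ldots\}$,
$$\mean\!\left[\binom{n-X_t}{2}\,\middle|\,X_{t-1}=p\right] = \sum_{m \geq 0}2^{-(m+1)}\binom{n-p-m}{2}\ (\text{with } \binom{j}{2}=0 \text{ for } j<2) \;\leq\; \binom{n-p}{2}-1,$$
which is an elementary sum — using $\sum_{m\geq 0} 2^{-(m+1)}\binom{n-p-m}{2}$ telescopes nicely because $\binom{j}{2} - \frac12\binom{j-1}{2} - \frac14\binom{j-2}{2} - \cdots$ collapses.

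The main obstacle will be justifying the stochastic domination by the geometric-within-a-scan claim rigorously, in particular confirming that once the $*$ advances past index $i$ the scan's continuation at indices $i+1, i+2, \ldots$ really does furnish fresh independent fair coins and no item can block the $*$ (which holds because $* \preceq b$ is false for every item $b$, so the only way the $*$ fails to advance is $c_{t',i}=0$, and if the $*$ does not reach position $n$ during the scan it is because it ran out of coins that came up $1$, not because of any item). Once that is in hand, the potential computation with $\phi(p) = \binom{n-p}{2}$ is routine, and optional stopping gives $\mean[\tau] \leq \phi(1) = (n^2 - 3n + 2)/2$.
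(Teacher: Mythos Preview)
Your central claim—that during one deterministic scan the tracked $*$ can only move to the right—is false, and this invalidates the rest of the plan. You analyze what happens at sub-step $i=p$ (where the $*$ sits at $r(i)$ and may advance), but you overlook sub-step $i=p-1$, which fires \emph{before} $i=p$ and examines the pair $r(p-1),\,r(p)=*$. In the extended order $*$ is incomparable to every item (the paper's code sets both $*\preceq a$ and $a\preceq *$ to false), so whenever $r(p-1)$ is an item $a$ we have $\neg(a\preceq *)$, and with probability $1/2$ the swap sends the $*$ \emph{left} to position $p-1$; the remaining sub-steps $p,p+1,\ldots,n-1$ never touch position $p-1$ again. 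Hence from any $p>1$ a single scan produces increment $-1$ with probability $1/2$ and a nonnegative geometric increment with probability $1/2$; the conditional mean increment is $0$, not $+1$.

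Because the interior walk is a martingale, a potential that drops by a fixed amount per scan must be quadratic, not the essentially linear-drift potential $\binom{n-p}{2}$ you propose. The paper takes $\phi(x)=x^{2}-3x+2$, checks $\mathbb{E}[\phi(p_{t+1})-\phi(p_t)\mid p_t]=2$ for every $p_t\ge 1$ (the constant comes from the variance of the increment, exactly as in gambler's-ruin), and then applies Wald's identity. As a secondary point, your supermartingale inequality fails numerically even under your own right-only model: at $p=n-2$ you would need $\sum_{m\ge 0}2^{-(m+1)}\binom{2-m}{2}\le\binom{2}{2}-1=0$, but the left side equals $1/2$.
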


\begin{proof}
Denote the position of \(*\) before the \(t\)th step of the chain by
\(p_t\). So \(p_0 = 1\). Consider what happens during the step. If
\(p_t > 1\), then when the substep at \(i = p_t - 1\) is executed, there
is a \(1 / 2\) chance of a move to the left, so \(p_{t + 1} = p_t - 1\).

Similarly, there is a \(1 / 2\) chance of not moving to the left. Then
at \(i = p_t\), there is a \(1 / 2\) chance of moving to the right. If
it does move to the right, at \(i = p_t + 1\) there is again a \(1 / 2\)
chance of moving to the right, and so on to the point where there is a
small chance that the symbol reaches the final position in a single
step.

The result is a distribution for \(p_{t + 1} - p_t\) when \(p_t > 1\):
\[
\mathbb{P}(p_{t + 1} - p_t = i) = 
 \left(\frac{1}{2}\right)^{i + 2} \mathbb{I}(i \in \{-1, 0, 1, 2, \ldots\}).
\] Here \(\mathbb{I}(q)\) is the usual indicator function that is 1 if
logical statement \(q\) is true, and 0 otherwise.

Note that this distribution assumes that \(p_t \in \{1, 2, \ldots \}\)
with no truncation at \(n\). If the \(p_t\) process moves to greater
than \(n\), then it must have been that the \(*\) symbol reached \(n\).
Conversely, \(p_t\) is at least the position of \(*\) so cannot reach
\(n\) before \(*\) does. Hence \[
\tau = \inf\{t:p_t \geq n \}
\] has the same distribution as \(\tau\) defined earlier.

This is equivalent to \([p_{t + 1} - p_t | p_t > 1] \sim G - 2\), where
\(G\) is a geometric random variable in \(\{1, 2, \ldots \}\) with mean
\(1 / 2\). So \[
\mathbb{E}[p_{t + 1} - p_t | p_t > 1] = \mathbb{E}[G - 2] = \frac{1}{1 / 2} - 2 = 2 - 2 = 0, 
\] and \begin{align*}
\mathbb{E}[p_{t + 1}^2 - p_t^2 | p_t > 1] &= -p_t^2 + 
  \sum_{i = -1}^\infty \left( \frac{1}{2}\right)^{i + 2} (p_t + i)^2 \\
  &= -p_t^2 + 
  \sum_{i = -1}^\infty \left( \frac{1}{2}\right)^{i + 2} (p_t^2 + 2 p_t i + i^2) \\
  &= 2 p_t \mathbb{E}[G - 2] + \mathbb{E}[(G - 2)^2] \\
  &= \mathbb{V}(G) + \mathbb{E}[(G - 2)]^2 \\
  &= (1 - 1/2) / (1 / 2)^2 = 2.
\end{align*}
These facts will be helpful later.

When \(p_t = 1\), there is 0 chance of moving backwards, hence \[
\mathbb{P}(p_{t + 1} - p_t = i) 
 = \left( \frac{1}{2} \right)^{i + 1} \mathbb{I}(i \in \{0, 1, 2, \ldots \}).
\] That is, \([p_{t + 1} - p_t| p_t = 1] \sim G - 1\), or
\([p_{t + 1} - p_t | p_t = 1] \sim G\). Hence \[
\mathbb{E}[p_{t + 1} - p_t | p_t = 1] = \mathbb{E}[G - 1] = \frac{1}{1 / 2} - 1 = 1,
\] and \[
\mathbb{E}[p_{t + 1}^2 - p_t^2 | p_t = 1] =
\mathbb{E}[G^2 - 1] = \frac{2 - (1 / 2)}{(1 / 2)^2} - 1 = 5.
\]

Putting this together, if \[
\phi(x) = x^2 - 3 x + 2,
\] it holds that \[
\mathbb{E}[\phi(p_{t + 1}) - \phi(p_{t}) | p_t > 1] = 1(2) - 3(0) = 2,
\] and \[
\mathbb{E}[\phi(p_{t + 1}) - \phi(p_{t}) | p_t = 1] = 1(5) - 3(1) = 2,
\] and \(\phi(1) = 0\).
\end{proof}

\begin{lemma} The expected number of steps needed for a \(*\) symbol to
reach the right hand side in the deterministic scan chain is bounded
above by \((n^2 - n + 2) / 2\).
\end{lemma}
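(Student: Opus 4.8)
The plan is to reuse the auxiliary walk $p_t$ from the previous lemma --- the untruncated position of the tracked $*$ at the start of scan $t$ --- together with the drift identity established there: writing $\phi(x) = x^2 - 3x + 2 = (x-1)(x-2)$, one has $\mathbb{E}[\phi(p_{t+1}) - \phi(p_t) \mid p_t] = 2$ for every $p_t \in \{1, 2, \ldots\}$, so that $\phi(p_t) - 2t$ is a martingale. Let $\tau = \inf\{t : p_t \geq n\}$, which (exactly as in the previous lemma) is the number of scans until the $*$ reaches the right end. Since from any position a single scan lands in $\{n, n+1, \ldots\}$ with probability at least $2^{-n}$, $\tau$ is stochastically dominated by a geometric variable and $\mathbb{E}[\tau] < \infty$; and for $t < \tau$ one has $p_t \leq n - 1$, so the conditional $L^1$ increments of the stopped martingale are bounded by a constant depending only on $n$ (the scan's jump has the finite first and second moments computed in the previous lemma). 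Optional stopping then gives $2\,\mathbb{E}[\tau] = \mathbb{E}[\phi(p_\tau)] - \phi(p_0) \leq \mathbb{E}[\phi(p_\tau)]$, the inequality because $p_0$ is a positive integer, so $\phi(p_0) = (p_0-1)(p_0-2) \geq 0$. The problem is thus reduced to evaluating $\mathbb{E}[\phi(p_\tau)]$.

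If $p_\tau$ were pinned to the value $n$ this would be $\phi(n) = n^2 - 3n + 2$; the extra $2n$ in the claimed bound is the contribution of the overshoot. I would argue that the scan that first carries $p_t$ to $n$ does so through a rightward run of swaps, each accepted independently with probability $1/2$, and --- since $p_t$ carries no truncation at $n$ --- this run is a genuine, unstopped geometric, so the memoryless property makes the overshoot $O := p_\tau - n$ geometric on $\{0, 1, 2, \ldots\}$ with $\mathbb{P}(O = k) = 2^{-(k+1)}$, its conditional law depending neither on $p_{\tau-1}$ nor on the trajectory before scan $\tau$. Then $\mathbb{E}[O] = 1$ and $\mathbb{V}(O) = 2$, and since $\phi$ is a monic quadratic, $\mathbb{E}[\phi(X)] = \mathbb{V}(X) + \phi(\mathbb{E}[X])$, whence
\[
\mathbb{E}[\phi(p_\tau)] = \mathbb{E}[\phi(n+O)] = \mathbb{V}(O) + \phi\bigl(n + \mathbb{E}[O]\bigr) = 2 + \phi(n+1) = n^2 - n + 2 .
\]
Combining with the previous paragraph gives $\mathbb{E}[\tau] \leq (n^2 - n + 2)/2$, with equality when the $*$ starts in position $1$ or $2$.

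The only genuinely delicate step is pinning down the overshoot law: I must check carefully that the move first taking $p_t$ to $n$ really is an untruncated geometric run, so that memorylessness yields an overshoot that is $\mathrm{Geometric}_0(1/2)$ with the \emph{same} distribution no matter from which position $p_{\tau-1}$ that scan began, and independent of the past --- this is precisely what allows $\mathbb{E}[\phi(p_\tau)]$ to be computed without conditioning on the pre-$\tau$ trajectory. Everything else is the variance identity for monic quadratics plus the geometric-moment bookkeeping already in hand.
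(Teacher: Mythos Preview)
Your proposal is correct and follows essentially the same route as the paper: both invoke Wald/optional stopping with the Lyapunov function $\phi$ established in the preceding lemma to reduce to computing $\mathbb{E}[\phi(p_\tau)]$, and both evaluate that expectation by using the memorylessness of the geometric rightward run to identify the overshoot $p_\tau - n$ as $\mathrm{Geometric}_0(1/2)$. Your write-up is in fact somewhat more careful than the paper's about verifying the optional-stopping hypotheses and about stating why the overshoot law does not depend on $p_{\tau-1}$; the variance identity $\mathbb{E}[\phi(X)] = \mathbb{V}(X) + \phi(\mathbb{E}[X])$ for monic quadratics is just a tidier way to do the same moment computation the paper carries out term by term.
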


\begin{proof} Wald's Equation~\cite{wald1944}
gives that \[
\mathbb{E}[\tau] = \mathbb{E}[\phi(p_{\tau})] / 2.
\]
Finding \( \mean[\phi(p_{\tau})] \) is somewhat tricky, since the final value of \( p_{\tau} \) is at least \( n \) rather than exactly \( n \).

When \(\tau = t\), that means that the step moved to the right. The
number of positions moved to the right is a geometric random variable,
which is memoryless over the integers. That is, the number of positions
to the right of \(n\) will also be a geometric random variable with
parameter \(1 / 2\) minus 1.

Hence the expected value of \(\phi(p_\tau)\) will be \begin{align*}
\mathbb{E}[\phi(p_{\tau})] 
  &= \sum_{i = 0}^\infty \left( \frac{1}{2} \right)^{i} ((n + i)^2 - 3(n + i) + 2) \\
  &= n^2 + 2n \mathbb{E}(G - 1) + \mathbb{E}((G - 1)^2) - 3 [n + \mathbb{E}(G - 1)] + 2 \\
  &= n^2 - n + 2.
\end{align*}
\end{proof}

Compare to the random scan chain, which requires on average at most
\(n^2(n - 1)\) steps. Now, each step in the deterministic chain takes
\(n\) times the number of operations for a random scan step, but still
there is a factor of two improvement in the run time by using the
deterministic scan.

\begin{lemma} The probability that \(\tau \geq (n^2 - n + 3) / 2\) is
at most \(1 / 2\).
\end{lemma}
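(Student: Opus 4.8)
The plan is to reduce the statement to a tail bound on the hitting time $\tau$ and then establish that bound using the Markov structure of the position walk $(p_t)$. First, $\tau$ is a positive integer (it counts complete deterministic scans), and $m := (n^2-n+2)/2$ is an integer because $n^2-n = n(n-1)$ is even; hence $(n^2-n+3)/2 = m+\tfrac12$ and $\{\tau \ge (n^2-n+3)/2\} = \{\tau > m\}$. The preceding lemma gives $\mathbb{E}[\tau] \le m$, so it suffices to prove $\mathbb{P}(\tau > m) \le \tfrac12$. I would stress at the outset that this does not follow from Markov's inequality, which only yields $\mathbb{P}(\tau > t) \le \mathbb{E}[\tau]/(t+1) \le \tfrac12$ for $t \gtrsim 2\mathbb{E}[\tau]$, a factor of two too large; nor from the first two moments, since a random variable on $\{1,m+1\}$ with mean $m$ can have $\mathbb{P}(\tau > m)$ close to $1$. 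The extra mileage must come from the fact that $\tau$ is the absorption time of a well-behaved chain.

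Here is the structure I would exploit. The walk $(p_t)$ on $\{1,\dots,n-1\}$, absorbed upon reaching $\{n,n+1,\dots\}$ and started at $p_0 = 1$, is stochastically monotone: for $x\ge 2$ the one-step law out of $x$ is the translate by $x$ of the fixed increment $G-2$, while state $1$ and state $2$ have the same one-step law (both the law of $G$), so larger states push to stochastically larger successors, and moreover the chain is skip-free to the left. The goal is to use this to produce a \emph{geometric upper envelope} for the survival function: a probability measure $\nu$ on $[0,1)$ with $\bar F(t) := \mathbb{P}(\tau > t) \le \int_{[0,1)} q^{t}\, d\nu(q)$ for all $t\ge 0$ and $\int_{[0,1)} (1-q)^{-1}\, d\nu(q) \le \sum_{t\ge 0}\bar F(t) = \mathbb{E}[\tau] \le m$. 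Establishing this envelope---equivalently, the right ``aging'' property of the absorption time of this monotone skip-free chain, via the real nonnegative spectrum of its transient matrix and the dominance of the Perron eigenvalue---is the step I expect to be the main obstacle; the rest is a short calculation.

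Granting the envelope, the finish is one line. Writing $S := (1-q)^{-1}$, a $[1,\infty)$-valued random variable with $\mathbb{E}[S] \le m$, and using $-\ln q \ge 1-q$ and $e^{-x} \le (1+x)^{-1}$ to get $q^{m} \le e^{-m(1-q)} \le (1+m(1-q))^{-1} = S/(S+m)$, integrating against $\nu$ and applying Jensen's inequality to the concave increasing map $s\mapsto s/(s+m)$ gives
\[
\mathbb{P}(\tau > m) \le \int_{[0,1)} q^{m}\, d\nu(q) \le \mathbb{E}\!\left[\frac{S}{S+m}\right] \le \frac{\mathbb{E}[S]}{\mathbb{E}[S]+m} \le \frac{m}{m+m} = \frac12 ,
\]
which is the assertion. (A sanity check on why a factor of two is attainable: if $\bar F$ were exactly geometric, then $\mathbb{E}[\tau]=m$ forces parameter $1-1/m$, and $\bar F(m) = (1-1/m)^{m} < e^{-1} < \tfrac12$.)
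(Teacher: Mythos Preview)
Your reduction to $\{\tau>m\}$ with $m=(n^2-n+2)/2$ is fine, and the Jensen computation at the end is correct \emph{given} the envelope. The gap is the envelope itself. Observe that the two requirements you place on $\nu$,
\[
\bar F(t)\le\int_{[0,1)} q^{t}\,d\nu(q)\quad(t\ge 0)
\qquad\text{and}\qquad
\int_{[0,1)} (1-q)^{-1}\,d\nu(q)\le\sum_{t\ge 0}\bar F(t),
\]
force equality everywhere: summing the first over $t$ yields $\sum_t\bar F(t)\le\int(1-q)^{-1}d\nu$, so together with the second the sums coincide, and then the termwise inequality makes $\bar F(t)=\int q^{t}\,d\nu(q)$ for every $t$. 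In other words, what you actually need is that $\bar F$ is completely monotone, i.e.\ that $\tau$ is an exact mixture of geometrics. Real nonnegative spectrum of the transient matrix does not give this on its own; writing $\bar F(t)=\sum_i c_i\lambda_i^{t}$, you also need all $c_i\ge 0$, and neither stochastic monotonicity, skip-freeness to the left, nor ``Perron dominance'' secures that. There are structural results for passage times of skip-free chains that sometimes yield such representations, but you would have to state and verify one here; as written, the step you flag as ``the main obstacle'' is in fact the whole lemma, and it is left open.

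The paper proceeds very differently and much more cheaply. It invokes a sharpened Markov inequality for integer-valued variables (Theorem~1 of \cite{huber2021tail}): under a monotonicity hypothesis on the law of $\tau$, one has $\mathbb{P}(\tau\ge a)\le \mathbb{E}[\tau]/(2a-1)$. With $a=(n^2-n+3)/2$ this gives $2a-1=n^2-n+2$, and combining with $\mathbb{E}[\tau]\le(n^2-n+2)/2$ yields $\mathbb{P}(\tau\ge a)\le 1/2$ in one line. So the factor-of-two improvement over ordinary Markov that you correctly identified as necessary is obtained not from spectral structure of this particular chain but from a general tail bound for a restricted class of discrete distributions.
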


\begin{proof} Since
\(\mathbb{P}(\tau \geq i + 1) \leq \mathbb{P}(\tau \geq i)\) for all
\(i\), the version of Markov's inequality from~\cite{huber2021tail} (Theorem 1) applies which says \[
\mathbb{P}(\tau \geq a) \leq \mathbb{E}[\tau] / (2a - 1).
\] Using \(\mathbb{E}[\tau] \leq (n^2 - n + 2) / 2\) and
\(a = (n^2 - n + 3) / 2\) completes the proof.
\end{proof}

\begin{lemma} After \((n^2 - n + 3) \lceil \log_2(n) \rceil / 2\)
steps, there is at most a \(1 / 2\) chance that any \(*\) symbols remain
in the bounding chain.
\end{lemma}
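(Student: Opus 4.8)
The plan is to reduce coalescence of the whole bounding chain to the absorption of a single $*$ symbol at position $n$, to compare that absorption with the free random walk $(p_t)$ already analyzed, and then to amplify the per‑$(n^2-n+3)/2$‑step success probability of $1/2$ up to what is needed.

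\textbf{Structural reduction.} Under the deterministic scan two $*$ symbols never interchange their left‑to‑right order, since swapping two adjacent $*$'s leaves the set of $*$‑positions unchanged, and only the current rightmost $*$ can ever occupy position $n$. Hence the $*$ symbols are removed from the bounding chain one at a time, in right‑to‑left order, and the chain has coalesced (contains no $*$) exactly when the $*$ that began in the first component reaches position $n$. Moreover, whenever a $*$ is the rightmost symbol there is nothing to its right to truncate its rightward jumps, so its position process dominates $(p_t)$ started from its current position — if the cell to its immediate left also holds a $*$ it can only move right, which is strictly faster, and otherwise it evolves exactly as $(p_t)$ — and $(p_t)$ started anywhere dominates $(p_t)$ started at position $1$ (this last comparison follows from a direct monotone coupling of the two walks using a common geometric increment). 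Combining these facts with the strong Markov property and iterating the tail bound $\prob(\tau \geq (n^2-n+3)/2) \leq 1/2$, from any not‑yet‑coalesced configuration the probability of needing more than $k(n^2-n+3)/2$ additional steps to remove the present rightmost $*$ is at most $2^{-k}$ for every integer $k \geq 1$, uniformly over the configuration.

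\textbf{Amplification.} I would then split the $(n^2-n+3)\lceil\log_2 n\rceil/2$ steps into $\lceil\log_2 n\rceil$ consecutive blocks of $(n^2-n+3)/2$ steps and use the uniform per‑block estimate to track, across blocks, how many $*$'s remain. Since this count starts at $n-1 < 2^{\lceil\log_2 n\rceil}$ and the $*$'s make progress in parallel — each $*$ is dragged toward position $n$ while the $*$'s ahead of it are being absorbed, so that by the time it becomes the rightmost it is already near $n$ and is eliminated quickly — one block roughly halves the remaining count with probability at least $1/2$, so $\lceil\log_2 n\rceil$ blocks drive it to $0$ with the asserted probability $1/2$.

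\textbf{The main obstacle.} The delicate step is exactly this parallelism. If the per‑block estimate of the structural reduction is applied to the $n-1$ symbols one after another, the telescoping yields only a bound of order $n^3$ steps rather than $n^2\lceil\log_2 n\rceil$; it is loose precisely for a left‑ward $*$ that spends a long time blocked behind those ahead of it. To recover the stated bound one must quantify the exclusion‑type interaction of the $*$ symbols — that eliminating the front $*$'s pulls the trailing ones forward — and show that the aggregate slowdown over all $n-1$ symbols costs only the single logarithmic factor. Carrying this out, while also handling the boundary behaviour at position $1$ and the overshoot past position $n$ that the earlier lemmas already control, is where the real content of the proof lies.
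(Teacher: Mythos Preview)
You have correctly set up the problem but left the decisive step undone, and the route you are pursuing is harder than necessary. The key observation you are missing is that the $*$ symbols can be analyzed individually rather than sequentially. Since a $*$ is incomparable to every symbol, one may adopt a pass-through labeling in which two adjacent $*$'s are declared to exchange labels whenever the coin says swap; under this convention a tagged $*$ moves left or right at each relevant substep with probability exactly $1/2$ regardless of whether its neighbour is a number or another $*$. Hence the marginal law of every tagged $*$'s position is precisely that of the walk $(p_t)$ from the earlier lemmas, and its time to reach position $n$ is stochastically dominated by $\tau$. The different $*$'s are correlated only through sharing coins, which is irrelevant for a union bound.

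With this in hand the paper's proof is two lines: the submultiplicativity $\prob(\tau\ge a+b)\le\prob(\tau\ge a)\,\prob(\tau\ge b)$ (immediate from the Markov property of $(p_t)$ together with the monotone restart coupling you already noted) gives that after $\lceil\log_2 n\rceil$ blocks of $(n^2-n+3)/2$ steps any fixed $*$ survives with probability at most $(1/2)^{\lceil\log_2 n\rceil}$, and a union bound over the at most $n$ symbols finishes. Your proposed route---absorb the current rightmost $*$, then the next, while trying to quantify how the trailing $*$'s are ``pulled forward'' in the meantime---would require a genuine exclusion-process estimate that you have not supplied and that the paper never needs; the relabeling trick sidesteps your ``main obstacle'' entirely.
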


\begin{proof} Note that
\(\mathbb{P}(\tau \geq a + b) \leq \mathbb{P}(\tau \geq a)\mathbb{P}(\tau \geq b)\).
Therefore, repeating the \(n^2 - n + 3\) steps
\(\lceil \log_2(n) \rceil\) times that for any particular symbol makes
chance of any \(*\) symbols remaining at most \(1 / [2n]\). Using the
union bound then indicates that there is at most a \(1 / 2\) chance that
any of the \(*\) symbols remain after \((n^2 - n + 3)\lceil \log_2(n) \rceil / 2 \) steps.
\end{proof}

Noting that \( n(n^2 - n + 3)\lceil \log_2(n) \rceil \leq 1.83 n^3 \ln(n) \) then completes the proof of Theorem 1.

\hypertarget{implementation}{%
\section{Implementation}\label{implementation}}

The following code implements a variant of
this algorithm in Julia.  Instead of using the same value of \( t \) at each step, at each recursive step the value of \( t \) is doubled.  This allows us to run in time close to the actual value of \( t \) needed before the probability of success is \( 1 / 2 \), rather than the theoretical upper bound.

Here the partial
order is encoded using a 0-1 matrix \texttt{parord} when entry
\((i, j)\) is 1 if \(i \preceq j\) and 0 otherwise.

The following encodes a partial order on \(\{1, 2, 3, 4, 5\}\) where
\(1 \preceq 3\), \(1 \preceq 5\), \(2 \preceq 3\), \( 2 \preceq 5 \),
\(3 \preceq 5\), and \(4 \preceq 5\) together with the symmetric
relations.

\begin{Shaded}
\begin{Highlighting}[]
\NormalTok{parord }\OperatorTok{=}\NormalTok{ [}
    \FloatTok{1} \FloatTok{0} \FloatTok{1} \FloatTok{0} \FloatTok{1}\NormalTok{;}
    \FloatTok{0} \FloatTok{1} \FloatTok{1} \FloatTok{0} \FloatTok{1}\NormalTok{;}
    \FloatTok{0} \FloatTok{0} \FloatTok{1} \FloatTok{0} \FloatTok{1}\NormalTok{;}
    \FloatTok{0} \FloatTok{0} \FloatTok{0} \FloatTok{1} \FloatTok{1}\NormalTok{;}
    \FloatTok{0} \FloatTok{0} \FloatTok{0} \FloatTok{0} \FloatTok{1}
\NormalTok{]}
\end{Highlighting}
\end{Shaded}

\begin{verbatim}
## 5×5 Matrix{Int64}:
##  1  0  1  0  1
##  0  1  1  0  1
##  0  0  1  0  1
##  0  0  0  1  1
##  0  0  0  0  1
\end{verbatim}

Now for the underlying transposition chain.

\begin{Shaded}
\begin{Highlighting}[]
\KeywordTok{function} \FunctionTok{at\_step}\NormalTok{(sigma, coins, parord)}
\NormalTok{    a }\OperatorTok{=} \FunctionTok{length}\NormalTok{(coins)}
    \ControlFlowTok{for}\NormalTok{ i }\KeywordTok{in} \FloatTok{1}\OperatorTok{:}\NormalTok{a}
        \ControlFlowTok{if}\NormalTok{ coins[i] }\OperatorTok{==} \FloatTok{1} \OperatorTok{\&\&}\NormalTok{ parord[sigma[i], sigma[i}\OperatorTok{+}\FloatTok{1}\NormalTok{]] }\OperatorTok{==} \FloatTok{0}
\NormalTok{            sigma[i], sigma[i}\OperatorTok{+}\FloatTok{1}\NormalTok{] }\OperatorTok{=}\NormalTok{ sigma[i}\OperatorTok{+}\FloatTok{1}\NormalTok{], sigma[i]}
        \ControlFlowTok{end}
    \ControlFlowTok{end}
    \ControlFlowTok{return}\NormalTok{ sigma}
\KeywordTok{end}
\end{Highlighting}
\end{Shaded}

\begin{verbatim}
## at_step (generic function with 1 method)
\end{verbatim}

The following is the bounding chain implementation.

\begin{Shaded}
\begin{Highlighting}[]
\KeywordTok{function} \FunctionTok{bc\_step}\NormalTok{(y, coins, parord_ext)}
\NormalTok{    r, k }\OperatorTok{=}\NormalTok{ y}
\NormalTok{    a }\OperatorTok{=} \FunctionTok{length}\NormalTok{(coins)}
    \ControlFlowTok{for}\NormalTok{ i }\KeywordTok{in} \FloatTok{1}\OperatorTok{:}\NormalTok{a}
        \ControlFlowTok{if}\NormalTok{ parord\_ext[r[i], r[i}\OperatorTok{+}\FloatTok{1}\NormalTok{]] }\OperatorTok{==} \FloatTok{0} \OperatorTok{\&\&}\NormalTok{ coins[i] }\OperatorTok{==} \FloatTok{1}
\NormalTok{            r[i], r[i}\OperatorTok{+}\FloatTok{1}\NormalTok{] }\OperatorTok{=}\NormalTok{ r[i}\OperatorTok{+}\FloatTok{1}\NormalTok{], r[i]}
        \ControlFlowTok{end}
        \ControlFlowTok{if}\NormalTok{ r[n] }\OperatorTok{==}\NormalTok{ n }\OperatorTok{+} \FloatTok{1}
\NormalTok{            r[n] }\OperatorTok{=}\NormalTok{ k }\OperatorTok{+} \FloatTok{1}
\NormalTok{            k }\OperatorTok{+=} \FloatTok{1}
        \ControlFlowTok{end}
    \ControlFlowTok{end}
    \ControlFlowTok{return}\NormalTok{ (r, k)}
\KeywordTok{end}
\end{Highlighting}
\end{Shaded}

\begin{verbatim}
## bc_step (generic function with 1 method)
\end{verbatim}

Now for the combined step.

\begin{Shaded}
\begin{Highlighting}[]
\KeywordTok{function} \FunctionTok{sim\_step}\NormalTok{(state, coins, parord)}
\NormalTok{    n }\OperatorTok{=} \FunctionTok{size}\NormalTok{(parord)[}\FloatTok{1}\NormalTok{]}
\NormalTok{    parord\_ext }\OperatorTok{=}\NormalTok{ [parord }\FunctionTok{zeros}\NormalTok{(}\DataTypeTok{Int}\NormalTok{, n, }\FloatTok{1}\NormalTok{); }\FunctionTok{zeros}\NormalTok{(}\DataTypeTok{Int}\NormalTok{, }\FloatTok{1}\NormalTok{, n) }\FloatTok{1}\NormalTok{]}

\NormalTok{    sigma, (r, k) }\OperatorTok{=}\NormalTok{ state}
\NormalTok{    a }\OperatorTok{=} \FunctionTok{length}\NormalTok{(coins)}
    \ControlFlowTok{for}\NormalTok{ i }\KeywordTok{in} \FloatTok{1}\OperatorTok{:}\NormalTok{a}
\NormalTok{        c }\OperatorTok{=}\NormalTok{ coins[i]}
\NormalTok{        c\_prime }\OperatorTok{=}\NormalTok{ sigma[i] }\OperatorTok{==}\NormalTok{ r[i}\OperatorTok{+}\FloatTok{1}\NormalTok{] ? }\FloatTok{1} \OperatorTok{{-}}\NormalTok{ c }\OperatorTok{:}\NormalTok{ c}
        \ControlFlowTok{if}\NormalTok{ parord\_ext[r[i], r[i}\OperatorTok{+}\FloatTok{1}\NormalTok{]] }\OperatorTok{==} \FloatTok{0} \OperatorTok{\&\&}\NormalTok{ c }\OperatorTok{==} \FloatTok{1}
\NormalTok{            r[i], r[i}\OperatorTok{+}\FloatTok{1}\NormalTok{] }\OperatorTok{=}\NormalTok{ r[i}\OperatorTok{+}\FloatTok{1}\NormalTok{], r[i]}
        \ControlFlowTok{end}
        \ControlFlowTok{if}\NormalTok{ r[n] }\OperatorTok{==}\NormalTok{ n }\OperatorTok{+} \FloatTok{1}
\NormalTok{            r[n] }\OperatorTok{=}\NormalTok{ k }\OperatorTok{+} \FloatTok{1}
\NormalTok{            k }\OperatorTok{+=} \FloatTok{1}
        \ControlFlowTok{end}
        \ControlFlowTok{if}\NormalTok{ c\_prime }\OperatorTok{==} \FloatTok{1} \OperatorTok{\&\&}\NormalTok{ parord[sigma[i], sigma[i}\OperatorTok{+}\FloatTok{1}\NormalTok{]] }\OperatorTok{==} \FloatTok{0}
\NormalTok{            sigma[i], sigma[i}\OperatorTok{+}\FloatTok{1}\NormalTok{] }\OperatorTok{=}\NormalTok{ sigma[i}\OperatorTok{+}\FloatTok{1}\NormalTok{], sigma[i]}
        \ControlFlowTok{end}
    \ControlFlowTok{end}
    \ControlFlowTok{return}\NormalTok{ (sigma, (r, k))}
\KeywordTok{end}
\end{Highlighting}
\end{Shaded}

\begin{verbatim}
## sim_step (generic function with 1 method)
\end{verbatim}

Finally, the complete algorithm is as follows.

\begin{Shaded}
\begin{Highlighting}[]
\ImportTok{using} \BuiltInTok{Random}

\KeywordTok{function} \FunctionTok{CFTP\_linear\_extensions}\NormalTok{(t, parord)}
\NormalTok{    n }\OperatorTok{=} \FunctionTok{size}\NormalTok{(parord)[}\FloatTok{1}\NormalTok{]}
\NormalTok{    all\_coins }\OperatorTok{=} \FunctionTok{bitrand}\NormalTok{(n }\OperatorTok{{-}} \FloatTok{1}\NormalTok{, t)}

\NormalTok{    sigma }\OperatorTok{=} \FunctionTok{collect}\NormalTok{(}\FloatTok{1}\OperatorTok{:}\NormalTok{n)}
\NormalTok{    y }\OperatorTok{=}\NormalTok{ (}\FunctionTok{vcat}\NormalTok{(}\FunctionTok{fill}\NormalTok{(n }\OperatorTok{+} \FloatTok{1}\NormalTok{, n }\OperatorTok{{-}} \FloatTok{1}\NormalTok{), }\FloatTok{1}\NormalTok{), }\FloatTok{1}\NormalTok{)}
\NormalTok{    state }\OperatorTok{=}\NormalTok{ (sigma, y)}

    \ControlFlowTok{for}\NormalTok{ t\_prime }\KeywordTok{in} \FloatTok{1}\OperatorTok{:}\NormalTok{t}
\NormalTok{        state }\OperatorTok{=} \FunctionTok{sim\_step}\NormalTok{(state, all\_coins[}\OperatorTok{:}\NormalTok{, t\_prime], parord)}
    \ControlFlowTok{end}

    \ControlFlowTok{if}\NormalTok{ state[}\FloatTok{2}\NormalTok{][}\FloatTok{2}\NormalTok{] }\OperatorTok{==}\NormalTok{ n}
        \ControlFlowTok{return}\NormalTok{ state[}\FloatTok{1}\NormalTok{]}
    \ControlFlowTok{else}
\NormalTok{        state }\OperatorTok{=}\NormalTok{ (}\FunctionTok{CFTP\_linear\_extensions}\NormalTok{(}\FloatTok{2}\NormalTok{ * t, parord), }
\NormalTok{                 (}\FunctionTok{vcat}\NormalTok{(}\FunctionTok{fill}\NormalTok{(n}\OperatorTok{+}\FloatTok{1}\NormalTok{, n}\OperatorTok{{-}}\FloatTok{1}\NormalTok{), }\FloatTok{1}\NormalTok{), }\FloatTok{1}\NormalTok{))}
        \ControlFlowTok{for}\NormalTok{ t\_prime }\KeywordTok{in} \FloatTok{1}\OperatorTok{:}\NormalTok{t}
\NormalTok{            state }\OperatorTok{=} \FunctionTok{sim\_step}\NormalTok{(state, all\_coins[}\OperatorTok{:}\NormalTok{, t\_prime], parord)}
        \ControlFlowTok{end}
    \ControlFlowTok{end}
    \ControlFlowTok{return}\NormalTok{ state[}\FloatTok{1}\NormalTok{]}
\KeywordTok{end}
\end{Highlighting}
\end{Shaded}

\begin{verbatim}
## CFTP_linear_extensions (generic function with 1 method)
\end{verbatim}

The algorithm now produces uniform draws from the partial order,
regardless of the initial number of steps given.

\begin{Shaded}
\begin{Highlighting}[]
\FunctionTok{CFTP\_linear\_extensions}\NormalTok{(}\FloatTok{1}\NormalTok{, parord)}
\end{Highlighting}
\end{Shaded}

\begin{verbatim}
## 5-element Vector{Int64}:
##  2
##  1
##  3
##  4
##  5
\end{verbatim}

\begin{Shaded}
\begin{Highlighting}[]
\FunctionTok{CFTP\_linear\_extensions}\NormalTok{(}\FloatTok{5}\NormalTok{, parord)}
\end{Highlighting}
\end{Shaded}

\begin{verbatim}
## 5-element Vector{Int64}:
##  4
##  2
##  1
##  3
##  5
\end{verbatim}

\begin{Shaded}
\begin{Highlighting}[]
\FunctionTok{CFTP\_linear\_extensions}\NormalTok{(}\FloatTok{10}\NormalTok{, parord)}
\end{Highlighting}
\end{Shaded}

\begin{verbatim}
## 5-element Vector{Int64}:
##  4
##  1
##  2
##  3
##  5
\end{verbatim}

To test the algorithm on the partial order, assign each of the 8
possible outputs an integer from 1 to 8.

\begin{Shaded}
\begin{Highlighting}[]
\KeywordTok{function} \FunctionTok{assign}\NormalTok{(parord)}
\NormalTok{    sigma }\OperatorTok{=} \FunctionTok{CFTP\_linear\_extensions}\NormalTok{(}\FloatTok{10}\NormalTok{, parord)}
\NormalTok{    output }\OperatorTok{=} \FunctionTok{zeros}\NormalTok{(}\DataTypeTok{Int}\NormalTok{, }\FloatTok{8}\NormalTok{)}

    \ControlFlowTok{if}\NormalTok{ sigma }\OperatorTok{==}\NormalTok{ [}\FloatTok{1}\NormalTok{, }\FloatTok{2}\NormalTok{, }\FloatTok{3}\NormalTok{, }\FloatTok{4}\NormalTok{, }\FloatTok{5}\NormalTok{]}
\NormalTok{        output[}\FloatTok{1}\NormalTok{] }\OperatorTok{=} \FloatTok{1}
    \ControlFlowTok{elseif}\NormalTok{ sigma }\OperatorTok{==}\NormalTok{ [}\FloatTok{1}\NormalTok{, }\FloatTok{2}\NormalTok{, }\FloatTok{4}\NormalTok{, }\FloatTok{3}\NormalTok{, }\FloatTok{5}\NormalTok{]}
\NormalTok{        output[}\FloatTok{2}\NormalTok{] }\OperatorTok{=} \FloatTok{1}
    \ControlFlowTok{elseif}\NormalTok{ sigma }\OperatorTok{==}\NormalTok{ [}\FloatTok{1}\NormalTok{, }\FloatTok{4}\NormalTok{, }\FloatTok{2}\NormalTok{, }\FloatTok{3}\NormalTok{, }\FloatTok{5}\NormalTok{]}
\NormalTok{        output[}\FloatTok{3}\NormalTok{] }\OperatorTok{=} \FloatTok{1}
    \ControlFlowTok{elseif}\NormalTok{ sigma }\OperatorTok{==}\NormalTok{ [}\FloatTok{4}\NormalTok{, }\FloatTok{1}\NormalTok{, }\FloatTok{2}\NormalTok{, }\FloatTok{3}\NormalTok{, }\FloatTok{5}\NormalTok{]}
\NormalTok{        output[}\FloatTok{4}\NormalTok{] }\OperatorTok{=} \FloatTok{1}
    \ControlFlowTok{elseif}\NormalTok{ sigma }\OperatorTok{==}\NormalTok{ [}\FloatTok{2}\NormalTok{, }\FloatTok{1}\NormalTok{, }\FloatTok{3}\NormalTok{, }\FloatTok{4}\NormalTok{, }\FloatTok{5}\NormalTok{]}
\NormalTok{        output[}\FloatTok{5}\NormalTok{] }\OperatorTok{=} \FloatTok{1}
    \ControlFlowTok{elseif}\NormalTok{ sigma }\OperatorTok{==}\NormalTok{ [}\FloatTok{2}\NormalTok{, }\FloatTok{1}\NormalTok{, }\FloatTok{4}\NormalTok{, }\FloatTok{3}\NormalTok{, }\FloatTok{5}\NormalTok{]}
\NormalTok{        output[}\FloatTok{6}\NormalTok{] }\OperatorTok{=} \FloatTok{1}
    \ControlFlowTok{elseif}\NormalTok{ sigma }\OperatorTok{==}\NormalTok{ [}\FloatTok{2}\NormalTok{, }\FloatTok{4}\NormalTok{, }\FloatTok{1}\NormalTok{, }\FloatTok{3}\NormalTok{, }\FloatTok{5}\NormalTok{]}
\NormalTok{        output[}\FloatTok{7}\NormalTok{] }\OperatorTok{=} \FloatTok{1}
    \ControlFlowTok{elseif}\NormalTok{ sigma }\OperatorTok{==}\NormalTok{ [}\FloatTok{4}\NormalTok{, }\FloatTok{2}\NormalTok{, }\FloatTok{1}\NormalTok{, }\FloatTok{3}\NormalTok{, }\FloatTok{5}\NormalTok{]}
\NormalTok{        output[}\FloatTok{8}\NormalTok{] }\OperatorTok{=} \FloatTok{1}
    \ControlFlowTok{end}

    \ControlFlowTok{return}\NormalTok{ output}
\KeywordTok{end}
\end{Highlighting}
\end{Shaded}

\begin{verbatim}
## assign (generic function with 1 method)
\end{verbatim}

Run this multiple times to get an idea of the output

\begin{Shaded}
\begin{Highlighting}[]
\KeywordTok{function} \FunctionTok{test\_run}\NormalTok{(trials, parord)}
\NormalTok{    res }\OperatorTok{=} \FunctionTok{zeros}\NormalTok{(}\DataTypeTok{Int}\NormalTok{, }\FloatTok{8}\NormalTok{)}
    \ControlFlowTok{for}\NormalTok{ \_ }\KeywordTok{in} \FloatTok{1}\OperatorTok{:}\NormalTok{trials}
\NormalTok{        res }\OperatorTok{.+=} \FunctionTok{assign}\NormalTok{(parord)}
    \ControlFlowTok{end}
    \ControlFlowTok{return}\NormalTok{ res}
\KeywordTok{end}
\end{Highlighting}
\end{Shaded}

\begin{verbatim}
## test_run (generic function with 1 method)
\end{verbatim}

To time the runs, add a library.

\begin{Shaded}
\begin{Highlighting}[]
\ImportTok{using} \BuiltInTok{Dates}
\end{Highlighting}
\end{Shaded}

Now see what happens with the run.

\begin{Shaded}
\begin{Highlighting}[]
\NormalTok{trials }\OperatorTok{=} \FloatTok{10000}\NormalTok{;}
\NormalTok{start\_time }\OperatorTok{=} \FunctionTok{now}\NormalTok{();}
\NormalTok{observed }\OperatorTok{=} \FunctionTok{test\_run}\NormalTok{(trials, parord);}
\NormalTok{result }\OperatorTok{=}\NormalTok{ observed }\OperatorTok{/}\NormalTok{ trials;}
\NormalTok{end\_time }\OperatorTok{=} \FunctionTok{now}\NormalTok{()}
\end{Highlighting}
\end{Shaded}

\begin{verbatim}
## 2025-06-14T06:50:59.902
\end{verbatim}

\begin{Shaded}
\begin{Highlighting}[]
\FunctionTok{println}\NormalTok{(}\StringTok{"Number of trials: "}\NormalTok{, trials)}
\end{Highlighting}
\end{Shaded}

\begin{verbatim}
## Number of trials: 10000
\end{verbatim}

\begin{Shaded}
\begin{Highlighting}[]
\FunctionTok{println}\NormalTok{(}\StringTok{"Elapsed time: "}\NormalTok{, end\_time }\OperatorTok{{-}}\NormalTok{ start\_time)}
\end{Highlighting}
\end{Shaded}

\begin{verbatim}
## Elapsed time: 1299 milliseconds
\end{verbatim}

\begin{Shaded}
\begin{Highlighting}[]
\FunctionTok{println}\NormalTok{(}\StringTok{"Result: "}\NormalTok{, result)}
\end{Highlighting}
\end{Shaded}

\begin{verbatim}
## Result: [0.1248, 0.1278, 0.1214, 0.1263, 0.1261, 0.1229, 0.1243, 0.1264]
\end{verbatim}

Now suppose that \(2 \preceq 4\) is added to the partial order.

\begin{Shaded}
\begin{Highlighting}[]
\NormalTok{parord[}\FloatTok{2}\NormalTok{, }\FloatTok{4}\NormalTok{] }\OperatorTok{=} \FloatTok{1}\NormalTok{;}
\NormalTok{parord}
\end{Highlighting}
\end{Shaded}

\begin{verbatim}
## 5×5 Matrix{Int64}:
##  1  0  1  0  1
##  0  1  1  1  1
##  0  0  1  0  1
##  0  0  0  1  1
##  0  0  0  0  1
\end{verbatim}

Then three of the eight partial orders now receive probability 0, and
the remaining five receive a 20\% chance.

\begin{Shaded}
\begin{Highlighting}[]
\NormalTok{trials }\OperatorTok{=} \FloatTok{10000}\NormalTok{;}
\NormalTok{start\_time }\OperatorTok{=} \FunctionTok{now}\NormalTok{();}
\NormalTok{observed }\OperatorTok{=} \FunctionTok{test\_run}\NormalTok{(trials, parord);}
\NormalTok{result }\OperatorTok{=}\NormalTok{ observed }\OperatorTok{/}\NormalTok{ trials;}
\NormalTok{end\_time }\OperatorTok{=} \FunctionTok{now}\NormalTok{()}
\end{Highlighting}
\end{Shaded}

\begin{verbatim}
## 2025-06-14T06:51:01.619
\end{verbatim}

\begin{Shaded}
\begin{Highlighting}[]
\FunctionTok{println}\NormalTok{(}\StringTok{"Number of trials: "}\NormalTok{, trials)}
\end{Highlighting}
\end{Shaded}

\begin{verbatim}
## Number of trials: 10000
\end{verbatim}

\begin{Shaded}
\begin{Highlighting}[]
\FunctionTok{println}\NormalTok{(}\StringTok{"Elapsed time: "}\NormalTok{, end\_time }\OperatorTok{{-}}\NormalTok{ start\_time)}
\end{Highlighting}
\end{Shaded}

\begin{verbatim}
## Elapsed time: 1203 milliseconds
\end{verbatim}

\begin{Shaded}
\begin{Highlighting}[]
\FunctionTok{println}\NormalTok{(}\StringTok{"Result: "}\NormalTok{, result)}
\end{Highlighting}
\end{Shaded}

\begin{verbatim}
## Result: [0.2035, 0.1971, 0.0, 0.0, 0.1943, 0.2075, 0.1976, 0.0]
\end{verbatim}

Note that because the bounding chain is only concerned with eliminating
\(*\) symbols, the distribution of the running time of the two partial
orders to generate a sample will be exactly the same!

\bibliographystyle{plain}


\end{document}